\newcommand\G{\mathcal{G}}
\begin{document}

\title{Cost Sharing in the Aspnes Inoculation Model}

\author{Michael J. Collins}
\institute{Sandia National Laboratories\footnote{Sandia National Laboratories is a multi-program laboratory managed and operated by Sandia Corporation, a wholly owned subsidiary of Lockheed Martin Corporation, for the U.S. Department of Energy's National Nuclear Security Administration under contract DE-AC04-94AL85000.}\\Albuquerque, NM USA 87185\\ \email{mjcolli@sandia.gov}}

\maketitle

\begin{abstract}
We consider the use of cost sharing in the Aspnes model \cite{Aspnes} of network inoculation, showing that this can improve the cost of the optimal equilibrium by a factor of $O(\sqrt{n})$ in a network of $n$ nodes.
\end{abstract}

\section{Introduction}
A game-theoretic model of inoculation against viral infection was introduced by Aspnes et al. \cite{Aspnes}. Each player in this game is a node in an undirected connected graph $\G$ with $n$ nodes. Nodes represent network hosts that might become infected, while edges represent direct communication links through which a virus might spread. Each node has two possible pure strategies: either do nothing, or inoculate itself (i.e. install anti-viral protection). After the nodes have made their choices, an attacker selects one node uniformly at random to infect. Infection then propagates through the graph; a non-inoculated node becomes infected if any of its neighbors are infected.  

Let $I$ be the set of inoculated nodes. These nodes are in effect removed from the graph, leaving a (possibly disconnected) graph of \emph{vulnerable} nodes $\G_I$. Let $v$ be the initial node selected by the attacker. If $v \in I$, then $v$ is not infected and thus no nodes become infected. Otherwise, the infected nodes are precisely the nodes in the connected component containing $v$. Given any node $u$, the connected component containing $u$ is denoted $\kappa_u$, and the size of this component is denoted $k_u$ (with $k_u = 0$ if $u$ is inoculated). Henceforth ``component'' will always mean a connected component of $\G_I$.

Let $C$ be the cost of inoculation, and $L$ be the loss suffered by an infected node. Thus if $u \in I$ , the cost borne by $u$ is just $C$: otherwise $u$'s cost is $L\frac{k_u}{n}$, because $\frac{k_u}{n}$ is the probability of component $\kappa_u$ receiving the initial infection. Given these costs, the \emph{threshold} size for a component is $t=n\frac{C}{L}$. The set $I$ gives a Nash equilibrium if and only if both of these conditions hold:
\begin{itemize}
\item each component of $\G_I$ has size at most $t$
\item de-inoculating any node $j \in I$ (i.e. putting $j$ and all its adjacent edges back in the graph) creates a component of size at least $t$
\end{itemize}
In other words, if a component is larger than $t$, then each node in that component would be better off paying the cost $C$ to inoculate; if an inoculated node could rejoin the graph and still be in a component smaller than $t$, it would be better off not paying the cost of inoculation. 

The \emph{social cost} of $I$ is the sum of individual costs. This sum can be written as
\begin{equation}\label{eq:DetSocialCost}
C|I| + \frac{L}{n}\sum_i k_i^2
\end{equation}

\section{Cost Sharing}
A crucial feature of the Aspnes model is that one node can benefit from another node's decision to inoculate.  This motivates consideration of a model in which one node might pay part of the cost of another node's inoculation. Such \emph{cost-sharing} models of network games have been studied by many authors (see \cite{Charikar08} and references therein), but this idea has not, so far as we are aware, been applied to the Aspnes model previously. We have the same graph as in the original game, but now a pure strategy for node $i$  is a vector $a^i=(a^i_1 \dots a^i_n)$, where $a^i_j$ is the contribution made by node $i$ to the inoculation of node $j$. Node $j$ will be inoculated if and only if
\[
\sum_{1 \leq i \leq n} a_j^i \geq C \ .
\]
The individual cost for node $i$ is 
\[
\sum_{1 \leq j \leq n} a^i_j + L\frac{k_i}{n}
\]
We have the following necessary conditions for equilibria in this game.

\begin{theorem}\label{Th:costshar}
Let $\sigma = (a^1, a^2, \dots a^n)$ be a pure-strategy equilibrium in the cost-sharing Aspnes game. Then
\begin{enumerate}
  \item $\sum_j a^i_j \leq C$ for all nodes $i$ \label{maxcost}
  \item $\sum_i a^i_j$ is either $0$ or $C$ for all nodes $j$ \label{nowaste}
  \item Each $k_i \leq n\frac{C}{L}$ \label{samethresh}
  \item If $j$ is inoculated but de-inoculating $j$ would not increase the size of $\kappa_i$, then $a^i_j = 0$ \label{local}
\end{enumerate}
Also, the cost-sharing game and the original game have the same minimum social cost.
\end{theorem}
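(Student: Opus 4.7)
My approach for items \ref{maxcost}--\ref{local} is uniform: assume the stated condition fails and exhibit a single-player deviation that strictly decreases that player's own cost, contradicting the equilibrium hypothesis. For \ref{maxcost}, a player $i$ with $\sum_j a^i_j > C$ already pays more than $C$ before accounting for any expected loss, so the ``inoculate only yourself'' deviation (set $a^i_i := C$ and $a^i_k := 0$ for $k \ne i$) cuts $i$'s cost down to exactly $C$. For \ref{nowaste}, if $0 < \sum_i a^i_j < C$ then $j$ is not inoculated and every positive contribution to $j$ is wasted, so any contributor can zero out its share with no effect on the component structure; if instead $\sum_i a^i_j > C$ then any contributor with $a^i_j > 0$ can trim its share by the excess without un-inoculating $j$. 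For \ref{samethresh}, $k_i > nC/L$ gives $L k_i/n > C$, and the same ``inoculate only yourself'' deviation again brings $i$'s cost down to $C$. For \ref{local}, if $a^i_j > 0$ and de-inoculating $j$ would not grow $\kappa_i$, then setting $a^i_j := 0$ saves $i$ the payment $a^i_j$; by \ref{nowaste} this does un-inoculate $j$, but by hypothesis $k_i$ is unchanged, so $i$'s cost strictly drops.

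For the last sentence I read ``minimum social cost'' as the unconstrained social optimum---the minimum of the social cost over all strategy profiles---rather than as the best-equilibrium cost, because under the latter reading the claim would contradict the paper's headline quantitative result about the optimal equilibrium. Under this reading the proof is two short inclusions. For the upper bound, take any $I$ minimizing the original social cost and realize it in the cost-sharing game by $a^j_j := C$ for $j \in I$ and zeros elsewhere; the induced inoculation set is $I$, the components are identical, and summing individual costs reproduces exactly (\ref{eq:DetSocialCost}). For the lower bound, note that for any cost-sharing profile $\sigma$ the definition of inoculation forces $\sum_i a^i_j \ge C$ whenever $j \in I(\sigma)$, so $\sum_{i,j} a^i_j \ge C\,\lvert I(\sigma)\rvert$; adding the (identical) expected-loss term shows the cost-sharing social cost of $\sigma$ is at least the original social cost of $I(\sigma)$, which is at least the original optimum.

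Every step is either a one-player deviation argument or a one-line accounting identity, so no step is a serious obstacle. The only point that a careful writeup should flag is the reading of ``minimum social cost'' above: under the price-of-stability interpretation the sentence is actually false (small star-like examples already admit a cost-sharing equilibrium strictly cheaper than any original-game equilibrium), so the statement must be understood as referring to the unconstrained social optimum, which is preserved precisely because paying more than $C$ in aggregate to inoculate a single node cannot improve the configuration.
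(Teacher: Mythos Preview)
Your proof is correct and follows essentially the same line as the paper's: each of items \ref{maxcost}--\ref{local} is dispatched via the same single-player deviation (inoculate yourself for \ref{maxcost} and \ref{samethresh}; trim or zero a contribution for \ref{nowaste} and \ref{local}), and the equality of social optima is argued by the same two inclusions (the cost-sharing game enlarges the strategy set, while any cost-sharing profile pays at least $C\lvert I\rvert$ in total contributions). Your case split in \ref{nowaste} and your explicit reading of ``minimum social cost'' as the unconstrained optimum are more spelled out than the paper's version, but the substance is identical.
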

\begin{proof}
Any node $i$ violating~(\ref{maxcost}) could reduce its cost by  inoculating itself (i.e. setting $a^i_i=C$) and paying nothing for any other node. If~(\ref{nowaste}) does not hold for some $j$, then any node $i$ with $a^i_j \neq 0$ could reduce its cost by reducing its contribution to $j$ without changing $j$'s inoculation status. If (\ref{samethresh}) does not hold then any node in $\kappa_i$ could reduce its cost by inoculating itself.  Any node $i$ violating~(\ref{local}) could reduce its cost by setting $a^i_j = 0$, since this would not increase $i$'s probability of infection; we call this condition \emph{locality}. Note in particular this implies that, if $i$ is inoculated, then $i$ does not contribute to the cost of inoculating any node other than (perhaps) itself.

Since the cost-sharing game has a strictly larger set of strategies, its minimum social cost can be no greater. On the other hand, let $\sigma = (a^1, a^2, \dots a^n)$ be a strategy vector minimizing social cost in the cost-sharing game (note $\sigma$ need not be an equilibrium), and let $I$ be the resulting set of inoculated nodes. Since cost is minimized, the total amount spent on inoculations is exactly $C|I|$. Thus we can obtain the same social cost in the original game by inoculating all the nodes in $I$ and no others. \qed
\end{proof}

Although cost-sharing cannot improve the social optimum, it can create better equilibria. 
Consider a graph $\G$ with a cost-sharing equilibrium that inoculates a set of vertices $I$, and let $u \in I$. Suppose there are $k$ components $\tau_1, \dots \tau_k$ of $\G_I$ which are adjacent to $u$, with sizes $t_1, \dots t_k$. Let $t=\sum_i t_i$ and let $\hat t_j = t - t_j$. Since de-inoculating $u$ would create a component of size $t+1$, the contribution of  $u$ to its own inoculation can be no greater than $\frac{L}{n}(t+1)$. Because of locality, only nodes from $\cup_i \tau_i$ can contribute to inoculating $u$; therefore there must be some node in some $\tau_j$ whose contribution is at least
\[
\frac{C - L(t+1)/n}{t} = \frac{C}{t} - \frac{L}{n}\frac{t+1}{t} \ .
\]
However, since we are at equilibrium, this contribution must also be no greater than $\frac{L}{n}(\hat t_j + 1)$.  Combining these inequalities yields
\[
\frac{nC}{L} \leq t(\hat t_j + 2) + 1 \ .
\]
This gives a lower bound on $t$: 

\begin{theorem}\label{Th:componentSize}
Let $\G$ be a graph on $n$ nodes and let $I$ be a set of inoculated nodes which is obtained at an equilibrium of the cost-sharing Aspnes game with inoculation cost $C$ and infection cost $L$. Then any component created by de-inoculating $u \in I$ must have size at least
\begin{equation}\label{eq:MinCompSize}
\sqrt{\frac{nC}{L}}-1\ .
\end{equation}
\end{theorem}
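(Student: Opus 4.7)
The plan is to formalize the sequence of inequalities sketched in the paragraph preceding the theorem statement. Fix an equilibrium $\sigma$ and an inoculated node $u \in I$; let $\tau_1,\dots,\tau_k$ be the components of $\G_I$ adjacent to $u$, with sizes $t_1,\dots,t_k$ summing to $t$, and set $\hat t_j = t - t_j$. Since de-inoculating $u$ merges all adjacent components with $u$ itself, the resulting component has size $t+1$, so it suffices to prove $(t+1)^2 \geq nC/L$ (which is slightly stronger than the stated bound).

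First I would bound $u$'s self-contribution. If $u$ deviated by setting $a^u_u = 0$ it would, by part~\ref{nowaste} of Theorem~\ref{Th:costshar}, drop its total funding below $C$ and become un-inoculated, landing in a component of size $t+1$; equilibrium therefore forces $a^u_u \leq L(t+1)/n$. The contributions to $u$ sum to exactly $C$, and by part~\ref{local} the only other potential contributors are the $t$ nodes in $\bigcup_j \tau_j$ (a distant node could withdraw its contribution to $u$ without changing its own infection probability). A pigeonhole argument then produces an index $j$ and a node $i^* \in \tau_j$ with
\[
a^{i^*}_u \;\geq\; \frac{C - L(t+1)/n}{t}.
\]

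Next I would apply the equilibrium condition to $i^*$. Any positive decrease in $a^{i^*}_u$ again de-inoculates $u$ and grows $i^*$'s component from $t_j$ to $t+1$, an increase of $\hat t_j + 1$; equilibrium thus requires $a^{i^*}_u \leq L(\hat t_j + 1)/n$. Chaining the two bounds on $a^{i^*}_u$ and clearing denominators gives $nC/L \leq t(\hat t_j + 2) + 1$, and the trivial inequality $\hat t_j \leq t$ yields $nC/L \leq (t+1)^2$, as required.

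The main obstacle is really a matter of careful accounting rather than any new idea. One has to use locality (part~\ref{local}) to rule out distant nodes, the exact-sum condition (part~\ref{nowaste}) to ensure that a single contributor's downward deviation actually de-inoculates $u$, and the pigeonhole step at exactly the right denominator $t$; all of these pieces are supplied by Theorem~\ref{Th:costshar}, so no new machinery is needed.
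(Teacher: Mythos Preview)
Your proof is correct and follows exactly the paper's own argument (which is the paragraph immediately preceding the theorem): bound $u$'s self-contribution by $L(t+1)/n$, use locality plus pigeonhole to find a contributor $i^*\in\tau_j$ with $a^{i^*}_u\geq (C-L(t+1)/n)/t$, bound that contribution above by $L(\hat t_j+1)/n$, and combine to get $nC/L\leq t(\hat t_j+2)+1$. Your only addition is making the final step explicit via $\hat t_j\leq t$, yielding $(t+1)^2\geq nC/L$, which is in fact slightly stronger than the stated bound.
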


Recall that without cost sharing, any component created by de-inoculating a node must have size at least $\frac{C}{L}n$. We now show that with cost sharing, the bound (\ref{eq:MinCompSize}) can be attained, and this can lead to a $O(\sqrt{n})$ improvement in the cost of the best equilibrium. 

\begin{theorem}\label{Th:main}
There exists a family of graphs on which cost-sharing improves the best Aspnes equilibrium by a factor of $O(
\sqrt{n})$.
\end{theorem}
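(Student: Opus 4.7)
The plan is to exhibit an explicit family of graphs on which the lower bound of Theorem~\ref{Th:componentSize} is essentially tight and the standard Aspnes game is forced into equilibria whose vulnerable components have size $\Theta(n)$. A convenient choice is the cycle $C_n$ with $C=L$, giving threshold $t=nC/L=n$; for cleanliness I would assume $n$ is a perfect square.

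First I would bound the best equilibrium of the standard game on this family. The key observation is that removing a single vertex from $C_n$ leaves a connected path of size $n-1$, so if $|I|\geq 2$ then de-inoculating any $v\in I$ yields a component of size at most $n-1<t$, violating the standard equilibrium condition. Hence only $I=\emptyset$ and singletons can be equilibria, with social costs $Ln$ and $C+L(n-1)^2/n$ respectively; both are $\Theta(Ln)$.

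Next I would construct a cost-sharing equilibrium of social cost $\Theta(L\sqrt n)$. Take $I$ to be $\sqrt n$ evenly spaced vertices, so $\G_I$ is a disjoint union of $\sqrt n$ paths each of size $\sqrt n-1$. Set $a^u_u=0$ for every $u\in I$, and for each non-inoculated $w$ let $w$ contribute $C/(2(\sqrt n-1))$ to each of the two inoculated nodes flanking its component; take all other $a^i_j=0$. Each inoculated node then receives exactly $C$, locality holds, and the social cost is
\[
C|I|+\frac{L}{n}\sum_i k_i^2 \;=\; \sqrt n\cdot C+\frac{L}{n}\,\sqrt n(\sqrt n-1)^2 \;=\; \Theta(L\sqrt n),
\]
so the ratio to the $\Theta(Ln)$ lower bound for the standard game is $\Theta(\sqrt n)$.

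The main obstacle is verifying that this profile really is an equilibrium. Three potential deviations must be ruled out. An inoculated $u$ cannot gain by increasing $a^u_u$, and by locality any contribution to another inoculated node is wasted, so $u$ is content. For a non-inoculated $w$, withdrawing its share from a neighboring $u$ drops $u$ below the inoculation threshold, merging $w$'s component with the opposite component of $u$ and raising $w$'s infection cost by $L\sqrt n/n=L/\sqrt n$; this exceeds the saving $C/(2(\sqrt n-1))$ whenever $C\leq L$ and $n\geq 4$, so $w$ prefers not to deviate. Finally, $w$ self-inoculating costs at least $C=L$, far exceeding its current total cost of $O(L/\sqrt n)$. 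Once these inequalities are checked alongside the bookkeeping conditions of Theorem~\ref{Th:costshar} (which are immediate from the construction), the theorem follows.
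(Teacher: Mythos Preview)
Your argument is correct and follows the same blueprint as the paper: take the cycle $C_n$, show the standard game is stuck with $\Theta(n)$ social cost, then place $\Theta(\sqrt n)$ evenly spaced inoculated nodes and share their cost among the adjacent components to obtain a $\Theta(\sqrt n)$ cost-sharing equilibrium. The differences are in the details. By specializing to $C=L$ you force the threshold to $t=n$, which makes the pure-strategy lower bound for the standard game a one-line observation (any $|I|\ge 2$ fails the de-inoculation condition); the paper instead keeps $C=1<L$ general and argues via the expected component sizes $S_i$. Your cost-sharing profile also differs slightly: you have each non-inoculated node split its payment between the two flanking inoculated vertices and set $a^u_u=0$, whereas the paper has every node (including the inoculated one) pay toward its single nearest inoculated vertex. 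Both profiles work; yours gives the inoculated nodes zero cost, which makes their equilibrium check trivial. The one respect in which the paper goes further is that it establishes the $\Omega(n)$ lower bound for the standard game even allowing \emph{mixed} strategies, so if ``best Aspnes equilibrium'' is meant to include mixed equilibria, your argument as written covers only the pure case.
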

\begin{proof}
Let $\G_n$ be a cycle on $n$ vertices. 
We normalize to $C=1$ and let $n$ approach infinity with $L$ fixed and $L>C$. 

We first show that, even allowing mixed strategies, the social cost of an equilibrium of the original Aspnes game on $\G_n$ must be $O(n)$. 
This expected social cost is 
\begin{equation}\label{eq:ExpSocialCost}
\sum_i \{a_i + (1-a_i)\frac{L}{n}S_i\}
\end{equation}
where $a_i$ is the probability that node $i$ inoculates and $S_i$ is the expected size of $i$'s component if $i$ does not inoculate. At equilibrium, we have $S_i \geq \frac{n}{L}$ when $a_i=1$, $S_i \leq \frac{n}{L}$ when $a_i=0$, and $S_i = \frac{n}{L}$ otherwise \cite{Aspnes}.  A node $i$ is called \emph{nonbinary} if $a_i$ is not zero or one. Note that if a nonbinary node $i$ is adjacent to node $i+1$ with $a_{i+1}=0$ then we must have $S_i=S_{i+1}=\frac{n}{L}$, since the two nodes must be in the same component if $i$ does not inoculate. Similarly, if two adjacent nodes have $a_i = a_{i+1} = 0$ then $S_i=S_{i+1}$.

Now consider a node $j$ with $a_j=1$. If there are no such nodes (or only one such node), then the observations we have just made imply $S_i = \frac{n}{L}$ for all $i$ (or all $i \neq j$), giving an expected social cost of $O(n)$. Otherwise consider the nodes $j+1$ and $j-1$ adjacent to $j$. Note that
\[
S_j = 1 + \exp(k_{j-1}) + \exp(k_{j+1}) 
\]
since if $j$ does not inoculate it merges the components $\kappa_{j-1}$ and $\kappa_{j+1}$. But clearly $S_t \geq \exp(k_t)$ for any node $t$; and since equilibrium requires $S_j \geq \frac{n}{L}$ we must have 
\[S_{j-1} + S_{j+1} \geq \frac{n}{L}-1\ .\]
So suppose without loss of generality that $S_{j+1} \geq \frac{n-L}{2L}$. But this implies we have at least $\frac{n-L}{2L}$ consecutive nodes $j+1, j+2, \dots j+\frac{n-L}{2L}$ with probability of inoculation less than 1: otherwise $S_{j+1}$ could not be this large. Their total contribution to the social cost (\ref{eq:ExpSocialCost}) is $O(n)$, since each such node $j+1 \leq t \leq j+\frac{n-L}{2L}$ has $S_t = S_{j+1}$.

However, with $\sqrt{nL}$ inoculated nodes evenly spaced around the cycle, breaking $\G_n$ into components of size $\sqrt{n/L}-1$, we would have a social cost of
\begin{equation}\label{eq:OptSocialCost}
\sqrt{nL} + \frac{L}{n}\sqrt{nL}\frac{n}{L} = 2\sqrt{nL} \ .
\end{equation}
This set of inoculated nodes (and thus this social cost) can be attained at a pure-strategy equilibrium in the cost-sharing game. If each node shares equally in the cost of protecting the nearest inoculated node, its cost is 
$$\frac{1}{\sqrt{n/L}} + L\frac{\sqrt{n/L}-1}{n} = 2\sqrt{\frac{L}{n}} - \frac{L}{n}$$
By not making any contribution to the cost of inoculation, a node would create a component of size $2\sqrt{n/L}-1$, incurring a cost of 
$$L\frac{2\sqrt{n/L}-1}{n} = 2\sqrt{\frac{L}{n}} - \frac{L}{n}\ .$$
It is easy to see (using the same reasoning as in theorem \ref{Th:costshar}) that no other change in a node's strategy could produce a better result.
\end{proof}

 Note that the fragility of this equilibrium can be removed (while maintaining a ratio of $O(\sqrt{n})$)  by inoculating $(1-\varepsilon)\sqrt{nL}$ evenly-spaced nodes; then a node's cost will strictly increase if it does not contribute to inoculation. Finally, it is easy to see that (\ref{eq:OptSocialCost}) is is in fact the social optimum; given the size of $I$, we minimize $\sum k_i^2$ by distributing the inoculated nodes evenly, so (\ref{eq:DetSocialCost}) becomes
\[
|I| + \frac{L}{n}|I|\left(\frac{n}{|I|}\right)^2
\]
which is minimized at $|I|=\sqrt{Ln}$. Thus the price of stability (defined as the ratio of the social optimum to the best equilibrium) is $1$.

\bibliographystyle{plain}
\bibliography{SANDExample.bib}

\end{document}